\newcommand{\ba}{\begin{eqnarray}}
\newcommand{\ea}{\end{eqnarray}}
\newcommand{\nn}{\nonumber}
\newtheorem{Theorem}{Theorem}[section]
\newtheorem{Proposition}[Theorem]{Proposition}
\title{Conditionally Gaussian Random Sequences for an Integrated Variance Estimator with Correlation between Noise and Returns}
\author{Stefano Peluso \footnote{Corresponding author. Universit\`{a} Cattolica del Sacro Cuore, Department of Statistical Sciences and Universit\`{a} della Svizzera italiana, Data Science Lab, ICS. Largo Gemelli 1 20123 Milan. E-mail: stefano.peluso@unicatt.it} 
\and Antonietta Mira \footnote{Universit\`{a} della Svizzera italiana, Data Science Lab, ICS and Università dell'Insubria, 
E-mail: antonietta.mira@usi.ch}
\and Pietro Muliere \footnote{Bocconi University of Milan, E-mail: pietro.muliere@unibocconi.it}}
\begin{document}
\maketitle

\begin{abstract}
Correlation between microstructure noise and latent financial logarithmic returns is an empirically relevant phenomenon with sound theoretical justification. With few notable exceptions, all integrated variance estimators proposed in the financial literature are not designed to explicitly handle such a dependence, or handle it only in special settings. We provide an integrated variance estimator that is robust to correlated noise and returns. For this purpose, a generalization of the Forward Filtering Backward Sampling algorithm is proposed, to provide a sampling technique for a latent conditionally Gaussian random sequence. We apply our methodology to intra-day Microsoft prices, and compare it in a simulation study with established alternatives, showing an advantage in terms of root mean square error and dispersion.

\medskip\noindent
{\bf Keywords:}     \, Forward Filtering and Backward Sampling; Integrated Variance; Kalman Filtering; State Space Models.
\end{abstract}
\section{Introduction}
Many statistical problems can be formulated as State Space models, where a latent stochastic process $\{\theta_t\}$ evolves in time with dynamics given by a transition equation $\theta_{t+1} = a_1(t)\theta_t + b_1(t)\epsilon_1(t+1)$, and with the process observed noisily through $\{\xi_t\}$, which evolves following the measurement equation $\xi_{t+1} = \tilde A_1(t)\theta_{t+1} + \tilde B_2(t)\epsilon_2(t+1)$, where $\epsilon_1(t)$ and $\epsilon_2(t)$ are Gaussian random variables and $a_1(t)$, $b_1(t)$, $\tilde A_1(t)$ and $\tilde B_2(t)$ are time-varying parameters. \cite{K60} proposed the celebrated Kalman filtering algorithm as optimal solution, in mean square sense, to the \textit{filtering problem}, that is the problem of estimating the unobservable $\theta_t$ by means of observations $\xi^t = \{\xi_1,\dots,\xi_t\}$. 
The Kalman filter is the starting point in \citet{FS94} and \citet{CK94} for an iterative procedure, today commonly known as \textit{Forward Filtering Backward Sampling} (FFBS), for obtaining posterior samples of $\{\theta_t\}$.

\citet{LS72,LS01a,LS01b} introduce the so-called \textit{conditionally Gaussian random sequences}, whose main two features are: (a) dependence of model parameters from past observations or from other random quantities, with the remaining randomness expressed in terms of Gaussian random variables, (b) correlation between $\xi_t$ and $\theta_t$, introduced through the presence, in both the transition and measurement equations, of common Brownian motions.

The Mixture Kalman filter of \citet{CL00} and the Gibbs samplers of \citet{S94a} and \citet{CK96} are some relevant examples that, among other things, generalize Kalman filtering and posterior sampling of the latent stochastic process along direction (a) above. Other simulation techniques such as the Extended Kalman filter of \citet{G72}, the Monte Carlo filter of \citet{K87} and the Particle filter of \citet{GSS93} do not require conditional Gaussianity, but are based on some form of approximation. \citet{G72} provides a suboptimal solution to the filtering problem, by linearizing the transition equation. \citet{K87} and \citet{GSS93} approximate the posterior distribution of the latent stochastic process through a weighted set of particles. A Kalman filter robust to the presence of outliers is proposed in \citet{RSP14}. \citet{DS95} and \citet{DB02} also developed FFBS algorithms. In particular, the methodology of \citet{DS95} defines the conditionally linear Gaussian state space model in terms of a single source of error, and it is able to reproduce correlated shocks between the measurement and the transition equations. Relying on the method of \citet{DS95}, \citet{CS08} develop a new simulation smoother for binomial longitudinal data.

\citet{HS96} and \citet{SK98} point out the empirical relevance of direction (b) for modeling the asymmetric behavior often found in stock prices, and \citet{HW87} emphasize the role of correlation between observed prices and latent stochastic volatility, suggesting that, neglecting it, it can cause significant biases in financial option pricing. Among others, \citet{BK04} and \citet{JPR04} further study this phenomenon in the financial economic literature. Another concrete situation where neglecting correlation in the two equations can be misleading is our motivating example, that is integrated volatility estimation in presence of dependence between microstructure noise and latent financial logarithmic returns, a phenomenon empirically found in \citet{HL06} and theoretically justified in \citet{DS13}. Many integrated variance estimators proposed in the literature (\citealt{ABDL03,AMZ05,ZMA05,Z06,AFX10,BNHLS11,CPA12,PCM12}) are not designed 
to handle such a dependence, except for \citet{BNHLS08}, but only in the special setting of a linear model of endogeneity. To our knowledge, the only papers that deal with this endogenous noise are \citet{KL10}, which propose a robust version of \citet{ZMA05}, and the pre-averaging estimator of \citet{Jacod}. \citet{BR09}, 
despite assuming exogenous noise, still provide a good benchmark method, since it is empirically found to perform well even if the underlying assumptions are violated.

It is now well recognized that the proper use of intra-day financial price observations leads to precise and accurate measurement and forecasting of unobservable measures, through the so-called realized estimators. See, for instance, the beta estimator proposed by \citet{BNHLS11} and \citet{G16}, the realized multivariate covariance of \citet{BNHLS11,PCM12,CPA12} and \citet{ShephardXiu}, the  correlation studied in \citet{BNShephard,AC10} and \citet{BKS13}. In the present paper we propose a realized variance estimator of the daily integrated volatility that is robust to the presence of correlation between microstructure noise and latent returns, generalizing the setting of \citet{AFX10}. For this purpose, we extend the FFBS algorithm from standard State Space models to the more general context of \citet{LS72} in an exact form (with no approximations involved). Therefore our contribution is twofold: (i) the generalization of the FFBS algorithm from State Space models to conditionally Gaussian random sequences, an extension of interest in itself, since it solves the filtering and smoothing problem in a more general context; (ii) the inclusion of the new FFBS algorithm into a MCMC scheme that provides a Bayesian integrated variance estimator robust to correlation between microstructure noise and return, to our knowledge the first Bayesian estimator with these properties. The main advantages of a Bayesian estimator of the integrated variance relying on a system of observational and transition equations are that (i) the latent stochastic price process can be obtained as a byproduct, (ii) from the MCMC iterations any function of the integrated variance or of the latent price process (for instance, integrated quarticity) can be derived, (iii) not only a point estimate, but a whole posterior distribution of the quantity to estimate can be obtained, therefore providing uncertainty quantification of the integrated variance estimate.

The algorithm is presented in Section \ref{Gffbs_sec}, after an introduction to conditionally Gaussian random sequences in Section \ref{Liptser}. The motivating financial problem with related simulated studies and a real application to Microsoft data is detailed in Section \ref{SectionVariance}, and finally the conclusions are drawn in Section \ref{Conclusions}. Matlab codes for the proposed algorithm and the data supporting the findings in this study are available on request from the corresponding author. The data for the empirical application are not publicly available due to privacy restrictions.
\section{Conditionally Gaussian Random Sequences}\label{Liptser}
In this section we introduce the theoretical framework developed in \citet{LS72} (see also \citealt{LS01a} and \citealt{LS01b}), with focus on the recursive equations of conditionally Gaussian random sequences for the solution of the filtering problem.

On a probability space $(\Omega,\mathcal F, P)$, the random sequence $\{\theta_t,\xi_t\}_t$, $t=1,2,\dots$, with $\theta_t=(\theta_1(t),\dots,\theta_k(t))$ and $\xi_t=(\xi_1(t),\dots,\xi_l(t))$, defines the system of recursive equations
\ba
\theta_{t+1} &=& a_0(t,\omega) + a_1(t,\omega)\theta_t + b_1(t,\omega)\epsilon_1(t+1) + b_2(t,\omega)\epsilon_2(t+1) \label{system1a}\\
\xi_{t+1}    &=& A_0(t,\omega) + A_1(t,\omega)\theta_t + B_1(t,\omega)\epsilon_1(t+1) + B_2(t,\omega)\epsilon_2(t+1), \label{system1b} 
\ea
\noindent where $\epsilon_1(t)=(\epsilon_{1,1}(t),\dots,\epsilon_{1,k}(t))$ and $\epsilon_2(t)=(\epsilon_{2,1}(t),\dots,\epsilon_{2,l}(t))$ are independent Gaussian random variables with expected value $\mathbb E(\epsilon_{i,j}(t))=0$ and $\mathbb E(\epsilon_{i_1,j_1}(t)\epsilon_{i_2,j_2}(s)) = \delta(i_1,i_2)\delta(j_1,j_2)\delta(t,s)$, for all $i$ and $j$, where
$$\delta(x,y) = \left\{ \begin{array}{ll} 1, & x = y \\ 0, & x \neq y \end{array} \right.$$
In the sequel, $\theta_t$ and $\xi_t$ are, respectively, unobservable and observed vectors, with $\theta_0|\xi_0 \sim \Phi(m,\gamma)$, that is Gaussian with mean $m$ and variance $\gamma$. $a_0(t,\omega)$ and $A_0(t,\omega)$ are vector functions, and $a_1(t,\omega)$, $A_1(t,\omega)$, $b_1(t,\omega)$, $b_2(t,\omega)$, $B_1(t,\omega)$ and $B_2(t,\omega)$ are matrix functions, square integrable and measurable at time $t$. All the vector and matrix functions at time $t$ are collected in $D(t,\omega)$. In \citet{LS72}, $D(t,\omega)$ is assumed to be $\mathcal F_t^{\xi}$-measurable, where $\mathcal F_t^{\xi} = \sigma\{\omega:\ \xi_0,\dots,\xi_t\}$ is the $\sigma$-algebra generated by the random variables $\xi_0,\dots,\xi_t$. This assumption will be relaxed in Section \ref{Gffbs_sec}, where measurability with respect to $\sigma$-algebras generated by other random variables will be considered. Denote by $b \circ b = b_1b_1^*+b_2b_2^*$, $b \circ B = b_1B_1^*+b_2B_2^*$ and $B \circ B = B_1B_1^*+B_2B_2^*$ where $X^*$ is the transposed matrix of $X$ and $X^+ = Y^*(YY^*)^{-2}Y$ is the pseudo-inverse matrix of $X$, with $Y$ such that $Y^*Y=X$. For ease of notation we suppress the dependence on $\omega$.
\begin{Theorem}\label{Theorem1}
Suppose that $\mathbb E(||\theta_0||^2 + ||\xi_0||^2) < \infty$, $|(a_1(t))_{ij}|< L$ and $|(A_1(t))_{ij}|<L$, where $L$ is a positive constant. Then, $\theta_t|\xi_0,\dots,\xi_t \sim \Phi(m(t),\gamma(t))$, where $m(t)$ and $\gamma(t)$ are determined from the recursive equations
\ba
m(t+1)      &=& [a_0(t) + a_1(t)m(t)] + [b \circ B(t) + a_1(t)\gamma(t)A_1^*(t)] \nn\\
             && \cdot [B \circ B(t) + A_1(t)\gamma(t)A_1^*(t)]^+ \cdot [\xi_{t+1} - A_0(t) - A_1(t)m(t)] \label{filtering1a}\\
\gamma(t+1) &=& [a_1(t)\gamma(t)a_1^*(t) + b \circ b(t)] - [b \circ B(t) + a_1(t)\gamma(t)A_1^*(t)] \nn\\
             && \cdot [B \circ B(t) + A_1(t)\gamma(t)A_1^*(t)]^+ \cdot [b \circ B(t) + a_1(t)\gamma(t)A_1^*(t)]^* \label{filtering1b}         
\ea
\noindent with the initial conditions $m(0) = m$ and $\gamma(0) = \gamma$. 
\end{Theorem}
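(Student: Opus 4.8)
The plan is to establish the conditional Gaussianity of $\theta_{t+1}\mid\xi_0,\dots,\xi_{t+1}$ by induction on $t$, and simultaneously derive the recursions \eqref{filtering1a}--\eqref{filtering1b} by applying the standard formula for the conditional distribution of a jointly Gaussian vector. The base case $t=0$ is the hypothesis $\theta_0\mid\xi_0\sim\Phi(m,\gamma)$. For the inductive step, suppose $\theta_t\mid\mathcal F_t^\xi\sim\Phi(m(t),\gamma(t))$. The key observation is that, \emph{conditionally on} $\mathcal F_t^\xi$, the pair $(\theta_{t+1},\xi_{t+1})$ given by \eqref{system1a}--\eqref{system1b} is an affine function of the conditionally Gaussian vector $\theta_t$ and the independent Gaussian noise $(\epsilon_1(t+1),\epsilon_2(t+1))$, because all the coefficient matrices in $D(t,\omega)$ are $\mathcal F_t^\xi$-measurable (this is precisely where the measurability assumption on $D(t,\omega)$ is used) and the noise at time $t+1$ is independent of $\mathcal F_t^\xi$. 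Hence $(\theta_{t+1},\xi_{t+1})\mid\mathcal F_t^\xi$ is jointly Gaussian.

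Next I would compute the conditional mean vector and conditional covariance matrix of this joint Gaussian pair, using $\mathbb E(\epsilon_i(t+1)\mid\mathcal F_t^\xi)=0$, $\mathrm{Var}(\epsilon_i(t+1)\mid\mathcal F_t^\xi)=I$, independence of $\epsilon_1$ and $\epsilon_2$, and independence of the time-$(t+1)$ noise from $\theta_t$ given $\mathcal F_t^\xi$. This yields
\[
\mathbb E(\theta_{t+1}\mid\mathcal F_t^\xi) = a_0(t)+a_1(t)m(t),\qquad
\mathbb E(\xi_{t+1}\mid\mathcal F_t^\xi) = A_0(t)+A_1(t)m(t),
\]
\[
\mathrm{Cov}(\theta_{t+1}\mid\mathcal F_t^\xi) = a_1(t)\gamma(t)a_1^*(t)+b\circ b(t),\qquad
\mathrm{Cov}(\xi_{t+1}\mid\mathcal F_t^\xi) = A_1(t)\gamma(t)A_1^*(t)+B\circ B(t),
\]
\[
\mathrm{Cov}(\theta_{t+1},\xi_{t+1}\mid\mathcal F_t^\xi) = a_1(t)\gamma(t)A_1^*(t)+b\circ B(t),
\]
where the cross terms $b\circ b$, $b\circ B$, $B\circ B$ arise exactly from $\mathrm{Var}(b_1\epsilon_1+b_2\epsilon_2)$ and its analogues. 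Then I would invoke the Gaussian conditioning lemma: for a jointly Gaussian $(X,Y)$, $X\mid Y$ is Gaussian with mean $\mathbb E X+\mathrm{Cov}(X,Y)\,\mathrm{Cov}(Y)^{+}(Y-\mathbb E Y)$ and covariance $\mathrm{Cov}(X)-\mathrm{Cov}(X,Y)\,\mathrm{Cov}(Y)^{+}\,\mathrm{Cov}(X,Y)^*$. Substituting the quantities above and using $\mathcal F_{t+1}^\xi=\sigma(\mathcal F_t^\xi,\xi_{t+1})$ gives precisely \eqref{filtering1a} and \eqref{filtering1b}, completing the induction.

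Two technical points need care. First, the Gaussian conditioning formula must be justified when $\mathrm{Cov}(Y)$ is singular, which is why the statement uses the pseudo-inverse $[\,\cdot\,]^+$ rather than an ordinary inverse; one shows that the conditional mean and covariance formulas remain valid $P$-a.s. with the pseudo-inverse (on the range of the covariance operator), and that the resulting $m(t),\gamma(t)$ are well defined and, in the case of $\gamma(t)$, symmetric nonnegative definite. Second, one must verify the integrability/well-posedness hypotheses propagate through the recursion: the assumption $\mathbb E(\|\theta_0\|^2+\|\xi_0\|^2)<\infty$ together with the uniform bounds $|(a_1(t))_{ij}|<L$, $|(A_1(t))_{ij}|<L$ and square-integrability of $D(t,\omega)$ guarantees $\mathbb E\|\theta_t\|^2<\infty$ and $\mathbb E\|\xi_t\|^2<\infty$ for all $t$, so all conditional moments invoked above exist. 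I expect the main obstacle to be the careful handling of the degenerate (pseudo-inverse) case — verifying that the conditional distribution formula is still exactly correct when the innovation covariance $B\circ B(t)+A_1(t)\gamma(t)A_1^*(t)$ fails to be invertible — rather than the algebra of the moment computations, which is routine. This is exactly the subtlety that \citet{LS72} address, and I would lean on their treatment of the pseudo-inverse identities for $b\circ b$, $b\circ B$, $B\circ B$.
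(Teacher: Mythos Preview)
Your proposal is correct and is precisely the inductive argument that \citet{LS72} use in their Theorem~3.2; note, however, that the paper itself does not give a proof but simply cites \citet{LS72}. Since the paper's ``proof'' is a pointer to the original source, your sketch in fact supplies more than the paper does, and it matches the Liptser--Shiryaev treatment (conditional Gaussianity by induction, moment computation, Gaussian conditioning with pseudo-inverse for the degenerate case).
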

\begin{proof}
See \citet{LS72}, Theorem 3.2.
\end{proof}
An important special case is when $D(t)$ is not a random, but a deterministic function of time $t$. In this case, if the vector $(\theta_0,\xi_0)$ is Gaussian, the process $(\theta_t,\xi_t)$ will also be Gaussian, with known covariance $\gamma(t)$. 
In this setting it is possible to reformulate the system of recursive equations \eqref{system1a} and \eqref{system1b} so that the dependence between $\xi_t$ and $\theta_t$ is explicit, and to recover the Kalman filter as special case.

The random sequence $\{\theta_t,\xi_t\}_t$ is known as conditionally Gaussian since it follows a Gaussian distribution at any specific time $t$, conditionally on the knowledge of $D(t)$. Note that this is not a restrictive assumption, since unconditionally the dependence in time and space is not necessarily linear (for instance when the distribution of $a_1(t)$ depends on $\theta_t$), and the disturbances are location-scale mixture of Gaussian random variables. A wide class of continuous distributions may be constructed as location-scale mixture of Normal distributions, such as contaminated Normals, Student's t, Logistic, Laplace and Stable distributions. As specified in \citet{MW92}, one way of seeing that the class of Normal mixture densities is very broad results by recalling that any density, even strongly multi-modal and asymmetric, can be approximated arbitrarily well by a Normal mixture. This is a setting of interest in finance, where we often observe skewed distributions of returns (see, among others, \citealt{B97} and \citealt{AC03}). Furthermore, distributions of returns can be contaminated by outliers that are not easy to detect and correct for, and that can severely distort a non robust estimation methodology, causing for instance relevant consequences on asset allocation studies (\citealt{BG92}). Finally, as pointed out in \citet{ES99}, multi-modal distributions can model situations of regime switches, known to have a relevance in option pricing (see for instance \citealt{BE02}) and mean-variance portfolio selection (\citealt{ZY03}, among others). 
\section{Sampling Algorithm of the Latent Process} \label{Gffbs_sec}
System \eqref{system1a} and \eqref{system1b} can be reformulated to highlight the relation between $\xi_t$ and $\theta_t$, so that the sequence of the observations can be interpreted as a realization of a stochastic Markovian latent process with measurement noise: 
\ba
\theta_{t+1} &=& a_0(t) + a_1(t)\theta_t + b_1(t)\epsilon_1(t+1) + b_2(t)\epsilon_2(t+1) \label{system2a}\\
\xi_{t+1}    &=& \tilde A_0(t) + \tilde A_1(t)\theta_{t+1}+ \tilde B_1(t)\epsilon_1(t+1) + \tilde B_2(t)\epsilon_2(t+1), \label{system2b} 
\ea
\noindent where $a_0(t),a_1(t),b_1(t),b_2(t),\tilde A_0(t),\tilde A_1(t),\tilde B_1(t),\tilde B_2(t)$ are stored in $\tilde D(t)$. This alternative representation is more common in the econometrics, financial and engineering literature, and it can be derived from the system \eqref{system1a} and \eqref{system1b} since, substituting \eqref{system2a} in \eqref{system2b}, $\xi_{t+1}$ can be written as
\ba
\xi_{t+1}    &=& \tilde A_0(t) + \tilde A_1(t)[a_0(t) + a_1(t)\theta_t + b_1(t)\epsilon_1(t+1) \nn\\
&& + b_2(t)\epsilon_2(t+1)] + \tilde B_1(t)\epsilon_1(t+1) + \tilde B_2(t)\epsilon_2(t+1) \nn,
\ea
\noindent clarifying that the relation between $D(t)$ and $\tilde D(t)$ is given by
\ba
\left\{\begin{array}{l} A_0(t) = \tilde A_0(t) + \tilde A_1(t)a_0(t)\\
                     A_1(t) = \tilde A_1(t)a_1(t)\\
                     B_1(t) = \tilde A_1(t)b_1(t) + \tilde B_1(t)\\
                     B_2(t) = \tilde A_1(t)b_2(t) + \tilde B_2(t).\\ \end{array} \right. \label{relation1}
\ea
Given the system \eqref{system2a}-\eqref{system2b}, from Theorem \ref{Theorem1} it follows that $\theta_t|\xi_1,\dots,\xi_t \sim \Phi(m(t),\gamma(t))$, where $m(t)$ and $\gamma(t)$ are obtained by the recursive equations \eqref{filtering1a} and \eqref{filtering1b}, but with $A_0(t)$, $A_1(t)$, $B_1(t)$ and $B_2(t)$ replaced by the respective right hand sides in \eqref{relation1}. When $b_2(t)=0$ and $\tilde B_1(t)=0$ (or, equivalently, when $b_1(t)=0$ and $\tilde B_2(t)=0$) for all $t$, system \eqref{system2a}-\eqref{system2b} simplifies to 
\ba
\theta_{t+1} &=& a_0(t) + a_1(t)\theta_t + b_1(t)\epsilon_1(t+1) \label{system3a}\\
\xi_{t+1}    &=& \tilde A_0(t) + \tilde A_1(t)\theta_{t+1} + \tilde B_2(t)\epsilon_2(t+1), \label{system3b} 
\ea
for which the filtering problem can be solved through the Kalman filtering iterations:
\ba
m(t+1)      &=& [a_0(t) + a_1(t)m(t)] + [a_1(t)\gamma(t)A_1^*(t)] \nn\\
             && \cdot [B_2^2(t) + A_1(t)\gamma(t)A_1^*(t)]^+ \cdot [\xi_{t+1} - A_0(t) - A_1(t)m(t)] \nn\\
\gamma(t+1) &=& [a_1(t)\gamma(t)a_1^*(t) + b_1^2(t)] - [a_1(t)\gamma(t)A_1^*(t)] \nn\\
             && \cdot [B_2^2(t) + A_1(t)\gamma(t)A_1^*(t)]^+ \cdot [a_1(t)\gamma(t)A_1^*(t)]^*. \nn         
\ea
In the simplified setting of model \eqref{system3a}-\eqref{system3b}, \citet{FS94} and \citet{CK94} introduce the Forward Filtering and Backward Sampling (FFBS) algorithm, to sample $\theta^T$ a posteriori from 
$$ p(\theta^T|\xi^T,\tilde D(1),\dots,\tilde D(T)) \propto \prod_{t=1}^T \phi(V_t W_t, V_t),$$
\noindent where 
\ba
V_t^{-1} &=& A_1^*(t) (B_2^2(t))^+ A_1(t) + a_1^*(t) (b_1^2(t))^+ a_1(t) + \gamma^+(t) \nn\\
W_t &=& A_1(t) (B_2^2(t))^+ (\xi_{t+1}-A_0(t)) + a_1^*(t)(b_1^2(t))^+(\theta_{t+1}-a_0(t)) + \gamma^+(t)m(t). \nn
\ea

Exploiting an extended factorization of the posterior density of $\theta$, induced by the shared Brownian motions, we derive a generalized version of the FFBS algorithm, to jointly sample $$\theta_1,\dots,\theta_T|\xi_1,\dots,\xi_T,\tilde D(1),\dots,\tilde D(T)$$ from the system \eqref{system2a}-\eqref{system2b} (an equivalent algorithm for the system \eqref{system1a}-\eqref{system1b} can also be formulated). For easier reference in the sequel, we refer to this algorithm as G-FFBS.
\begin{Proposition}\label{Prop1}
Given $\xi^T$ generated from model \eqref{system2a}-\eqref{system2b}, then 
$$ p(\theta^T|\xi^T) \propto \prod_{t=1}^T \phi(V_t W_t, V_t),$$
\noindent where 
\ba
V_t^{-1} &=& (A_1(t)-B\circ b(t) (b\circ b)^+(t) a_1(t))^* \Sigma_t^+ (A_1(t)-B\circ b(t) (b\circ b)^+(t) a_1(t)) \nn\\
&& + a_1^*(t) (b\circ b)^+(t) a_1(t) + \gamma^+(t) \nn\\
W_t &=& (A_1(t)-B\circ b(t) (b\circ b)^+(t) a_1(t))^* \Sigma_t^+ (\xi_{t+1}-A_0(t)\nn\\
&& - B\circ b(t)(b\circ b)^+(t)(\theta_{t+1}-a_0(t))) + a_1^*(t)(b\circ b)^+(t)(\theta_{t+1}-a_0(t)) + \gamma^+(t)m(t) \nn\\
\Sigma_t &=& B \circ B(t) - B \circ b(t) (b \circ b)^+(t) B \circ b^*(t)\nn.
\ea
\end{Proposition}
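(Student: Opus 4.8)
The plan is to establish the classical backward factorization
\[
p(\theta^T\,|\,\xi^T)\;=\;p(\theta_T\,|\,\xi^T)\prod_{t=1}^{T-1} p(\theta_t\,|\,\theta_{t+1},\dots,\theta_T,\xi^T),
\]
valid by the chain rule, and then to identify each factor with the stated Gaussian density $\phi(V_t W_t,V_t)$ in $\theta_t$. The first and most delicate step is a conditional-independence reduction: since, conditionally on $\theta_{t+1}$, the quantities $\theta_{t+2},\dots,\theta_T,\xi_{t+2},\dots,\xi_T$ are generated from $\theta_{t+1}$ and from innovations $\epsilon_1(s),\epsilon_2(s)$, $s\ge t+2$, that are independent of $(\theta^{t+1},\xi^{t+1})$, I would show $p(\theta_t\,|\,\theta_{t+1},\dots,\theta_T,\xi^T)=p(\theta_t\,|\,\theta_{t+1},\xi^{t+1})$. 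The point where this departs from the classical FFBS of \citet{FS94,CK94} is that $\xi_{t+1}$ cannot be dropped from the conditioning set: because the same innovations $\epsilon_1(t+1),\epsilon_2(t+1)$ enter both the transition of $\theta$ and the measurement $\xi_{t+1}$ --- equivalently, because $B\circ b(t)\neq 0$ --- the observation $\xi_{t+1}$ stays informative about $\theta_t$ even after $\theta_{t+1}$ is known. This is the ``extended factorization induced by the shared Brownian motions''.

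Next I would apply Bayes' rule inside the time-$(t+1)$ information set and use the conditional Markov property $p(\theta_{t+1},\xi_{t+1}\,|\,\theta_t,\xi^t)=p(\theta_{t+1},\xi_{t+1}\,|\,\theta_t)$ to get
\[
p(\theta_t\,|\,\theta_{t+1},\xi^{t+1})\;\propto\;p(\theta_t\,|\,\xi^t)\,p(\theta_{t+1}\,|\,\theta_t)\,p(\xi_{t+1}\,|\,\theta_t,\theta_{t+1}),
\]
the proportionality being up to a factor free of $\theta_t$. The three kernels are: (i) $p(\theta_t\,|\,\xi^t)=\phi(m(t),\gamma(t))$ by Theorem \ref{Theorem1}, with the parameters replaced according to \eqref{relation1}; (ii) $p(\theta_{t+1}\,|\,\theta_t)=\phi(a_0(t)+a_1(t)\theta_t,\,b\circ b(t))$, read off from \eqref{system2a}; and (iii) $p(\xi_{t+1}\,|\,\theta_t,\theta_{t+1})$, which I would compute in the representation \eqref{system1a}--\eqref{system1b} (equivalent to \eqref{system2a}--\eqref{system2b} via \eqref{relation1}) by noting that, given the parameters, the pair $\eta_{t+1}:=\theta_{t+1}-a_0(t)-a_1(t)\theta_t$ and $\zeta_{t+1}:=\xi_{t+1}-A_0(t)-A_1(t)\theta_t$ is zero-mean Gaussian with blocks $\operatorname{Var}(\eta_{t+1})=b\circ b(t)$, $\operatorname{Var}(\zeta_{t+1})=B\circ B(t)$ and $\operatorname{Cov}(\eta_{t+1},\zeta_{t+1})=b\circ B(t)$. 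The Gaussian conditioning formula, in pseudo-inverse form to cover a possibly singular $b\circ b(t)$, then yields
\[
\zeta_{t+1}\,|\,\eta_{t+1}\ \sim\ \phi\!\left(B\circ b(t)\,(b\circ b)^{+}(t)\,\eta_{t+1},\ \Sigma_t\right),
\]
and substituting back $\eta_{t+1}$ and $\zeta_{t+1}$ shows that $p(\xi_{t+1}\,|\,\theta_t,\theta_{t+1})$ is Gaussian with variance $\Sigma_t$, in which $\theta_t$ enters linearly through $A_1(t)-B\circ b(t)(b\circ b)^{+}(t)a_1(t)$ and through $\xi_{t+1}-A_0(t)-B\circ b(t)(b\circ b)^{+}(t)(\theta_{t+1}-a_0(t))$, precisely the combinations in the statement.

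It then remains to multiply the three kernels and complete the square in $\theta_t$: collecting the quadratic forms reproduces the precision $V_t^{-1}$, and collecting the linear terms reproduces $W_t$, exactly as stated, so $p(\theta_t\,|\,\theta_{t+1},\xi^{t+1})=\phi(V_tW_t,V_t)$. For $t=T$ there is no $\theta_{T+1}$ nor $\xi_{T+1}$ and the surviving factor is simply $p(\theta_T\,|\,\xi^T)=\phi(m(T),\gamma(T))$, the degenerate instance of the formula with the terms carrying $\theta_{T+1},\xi_{T+1}$ absent. Plugging the three identifications into the backward factorization gives $p(\theta^T\,|\,\xi^T)\propto\prod_{t=1}^T\phi(V_tW_t,V_t)$, which is the claim. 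The step I expect to be the main obstacle is (iii): deriving $p(\xi_{t+1}\,|\,\theta_t,\theta_{t+1})$ while keeping the covariance bookkeeping in the representation --- \eqref{system1a}--\eqref{system1b} versus \eqref{system2a}--\eqref{system2b} --- in which the products $b\circ b$, $b\circ B$, $B\circ B$ arise naturally, and handling the pseudo-inverses when $b\circ b(t)$ or $\Sigma_t$ is rank-deficient, so that $\zeta_{t+1}\,|\,\eta_{t+1}$ and the product of kernels must be read as possibly degenerate Gaussians in the sense of \citet{LS72}. The conditional-independence reduction of the first paragraph also needs care, to ensure it survives the relaxation --- announced in this section --- in which $D(t)$ need not be $\mathcal F_t^{\xi}$-measurable but only measurable with respect to the relevant enlarged filtration.
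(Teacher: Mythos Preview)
Your proposal is correct and follows essentially the same route as the paper: the backward factorization \eqref{factorization}, the reduction of each factor to the three-kernel product $p(\xi_{t+1}\mid\theta_{t+1},\theta_t)\,p(\theta_{t+1}\mid\theta_t)\,p(\theta_t\mid\xi^t)$, the Gaussian conditioning that yields $\xi_{t+1}\mid\theta_{t+1},\theta_t\sim\phi(\mu_t,\Sigma_t)$, and the completion of the square to $\phi(V_tW_t,V_t)$. Your treatment is slightly more explicit than the paper's on the conditional-independence reduction and on the boundary case $t=T$, but the argument is the same.
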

\begin{proof}
	See Appendix A.
\end{proof}
The proposed generalization over the traditional FFBS finds relevant empirical justification in the motivating example that will be discussed in Section \ref{SectionVariance}. The algorithm requires a forward step in which the quantities of interest $m(t)$ and $\gamma(t)$ are computed following Theorem \ref{Theorem1}, and a backward step where the latent process is sampled according to the factorization in \eqref{factorization}. In the traditional FFBS algorithm, the factor at time $t$ in \eqref{factorization} is proportional to $p(\theta_{t+1}|\theta_t) p(\theta_t|\xi^t)$, whilst in the proposed G-FFBS algorithm, there is an additional term $p(\xi_{t+1}|\theta_{t+1},\theta_t)$, since the correlation between measurement and transition errors generates a conditional dependence between $\xi_{t+1}|\theta_{t+1}$ and $\theta_t$. When $\tilde B_1(t) = 0$ and $b_2(t)=0$ for all $t$ or when $\tilde B_2(t) = 0$ and $b_1(t)=0$, there is no correlation between the two errors, the conditional independence of the observations is restored, and G-FFBS reduces to FFBS. 

For posterior inference on any function of the latent stochastic process $g(\theta^T)$, three cases can be distinguished: (i) $\tilde D(t)$ is measurable at time $t$, (ii) $\tilde D(t)$ is unknown at time $t$ but with known dynamics, (iii) $\tilde D(t)$ is unknown at time $t$ and with unknown dynamics. In case (i), $\tilde D(t)$ is measurable at time $t$ with respect to the $\sigma$-algebra generated by $\xi^T$ or by some other observables, and all samples from $\theta^T|\xi^T$ can be obtained through the G-FFBS. In case (ii) a simple procedure for posterior inference requires to recursively estimate $\tilde D(t)$ by $\hat D(t)$, which is estimated by the known dynamics, and then use $\hat D(t)$ instead of $\tilde D(t)$ in the G-FFBS (see \citealt{SW83} and \citealt{CMP01} for, respectively, a biometric and a financial application). When in (iii), $\tilde D(t)$ is unknown and cannot be parametrically forecasted: a complete Bayesian model has to be specified, with prior $\pi(\tilde D(1),\dots,\tilde D(T))$, and MCMC procedures are used to sample from the joint posterior $\mathbb P(\theta^T,\tilde D(1),\dots,\tilde D(T)|\xi^T)$, by repeatedly sampling at each iteration 
\begin{itemize}
\item $\mathbb P(\theta^T|\tilde D(1),\dots,\tilde D(T),\xi^T)$,
\item $\mathbb P(\tilde D(1),\dots,\tilde D(T)|\theta^T,\xi^T) \propto \mathbb P(\theta^T,\xi^T|\tilde D(1),\dots,\tilde D(T)) \, \pi(\tilde D(1,\omega),\dots,\tilde D(T,\omega)).$
\end{itemize}
The first step is executed through G-FFBS, and the whole algorithm is a Gibbs sampler (\citealt{GG84,GS90}) or a Metropolis-Hastings sampler (\citealt{MRRTT53,H70}), depending on wheather  $\pi(\tilde D(1),\dots,\tilde D(T))$ is a conjugate prior or not.

We conclude this section with a note on model parameters identifiability. If proper priors are adopted, in a Bayesian setting different values of parameters corresponding to the same likelihood value do not arise identifiability issues, with the exception of degenerate cases when the prior and the posterior distribution concide. To better understand this point let us collect in $\{\tilde{D}(t)\}$ all parameters $\tilde D(1),\dots,\tilde D(T))$. If, for different values of $ \{\tilde{D}(t)\}$, say $ \{\tilde{D}_1(t)\}$ and $ \{\tilde{D}_2(t)\}$, $\mathbb P(\theta^T,\xi^T|\{\tilde{D}_1(t)\})$ and $\mathbb P(\theta^T,\xi^T|\{\tilde{D}_2(t)\})$ are the same, there are no identifiability problems as long as $\mathbb P(\{\tilde{D}(t)\}|\xi^T)$ differs from $\mathbb P(\{\tilde{D}(t)\})$ for at least one value of $\{\tilde{D}(t)\}$. If $\mathbb P(\theta^T,\xi^T|\{\tilde{D}_1(t)\})=\mathbb P(\theta^T,\xi^T|\{\tilde{D}_2(t)\})$ and also $\mathbb P(\{\tilde{D}_1(t)\})=\mathbb P(\{\tilde{D}_2(t)\})$, we can only conclude that $\{\tilde{D}(t)\}$ has the same posterior probability  in correspondence of $\{\tilde{D}_1(t)\}$ and $\{\tilde{D}_2(t)\}$, but still $\{\tilde{D}(t)\}$ has a proper posterior distribution. The case of $\mathbb P(\{\tilde{D}(t)\}|\xi^T)=\mathbb P(\{\tilde{D}(t)\})$ occurs when the data does not provide any information on $\{\tilde{D}(t)\}$, a degenerate case verified only when $\mathbb P(\xi^T|\{\tilde{D}(t)\})$ is constant for all values of $\{\tilde{D}(t)\}$. 
\section{Robust Integrated Variance Estimation}\label{SectionVariance}
\subsection{Problem context}
In this section the developed sampling algorithm is applied to our motivating problem. Suppose that the logarithmic price of a given financial asset follows, within the trading day, the diffusion process
\ba
d\theta_t = c(t)dZ_t \nn
\ea
\noindent where $c(t)$ is the instantaneous volatility and $\{Z_t\}_t$ is the standard Brownian motion. $IV = \int c^2(t)dt$ is known as \textit{integrated variance} and is of interest as a measure of the true daily volatility. For estimation we use the discrete approximation of the continuous-time process above: $\theta_{(t+1)/T} = \theta_{t/T} + c_{t/T}\sqrt{1/T}Z_t$, where we have restated the time subscripts of the trading day in the interval $[0,1]$, $T^{-1}$ is the discrete time interval between adjacent observations, $\theta_{t/T}-\theta_{(t-1)/T}=O_p(T^{-1/2})$ and $Z_t$ is a standard Gaussian. $IV$ is a latent quantity, usually estimated with the so-called \textit{realized variance} $RV=\sum_{t=1}^T (\theta_{t/T}-\theta_{(t-1)/T})^2$, the sum of all intra-day high frequency observed logarithmic returns. $RV$ is a consistent and efficient estimator of $IV$ (\citealt{ABDL03}) when there is no microstructure noise, that is when $\theta_{t/T}$ for $t=1,\dots,T$ is directly observed. When microstructure noise is introduced, we observe $\xi_{t/T}$ instead of $\theta_{t/T}$, and the computable realized variance becomes $\tilde {RV}=\sum_{t=1}^T (\xi_{t/T}-\xi_{(t-1)/T})^2$. Note that we do not specify the continuous-time version of the measurement equation: the observed price relates to the latent price only through the microstructure noise, consequence of trades occurring at discrete times. Unfortunately, $\tilde {RV}$ loses the good properties of $RV$, since it is biased and inconsistent for the true integrated variance. As this problem arises mainly when the frequency of observations approaches infinity (that is when the maximum distance between adjacent measurement times approaches zero), it can be attenuated by sparse sampling, but this involves a loss of information because of the discarded data. Recently, some authors have followed the approach suggested by \citet{AMZ05} of sampling as often as possible and modeling the noise. In particular, a first consistent estimator of $IV$ for financial data contaminated by microstructure noise has been proposed in \citet{ZMA05} (whose order of convergence is improved in \citealt{Z06}), later followed by \citet{BNHLS08}, that propose a kernel-based estimator. There have been numerous extensions of the framework with noisy observations that account for additional empirically observed data irregularities, as asyncronicity of multivariate log prices, serially dependent microstructure noise, positivity of the estimator, skewness and kurtosis, presence of outliers, lead-lag effects  (see, for instance, \citealt{GT91,AFX10,BNHLS11,CPA12,PCM12,HRV14,buccheri2018high}). Less attention has been posed on the dependence between microstructure noise and latent financial logarithmic returns, empirically found in \citet{HL06}. Also, common microstructure theories from financial economics literature justify a correlation between latent returns and microstructure noise (\citealt{DS13}) by the presence of uninformed trades, risk aversion and market makers learning speed. All the estimators mentioned above are not designed for such a dependence, except for \citet{BNHLS08}, but only for a linear model of endogeneity. \citet{KL10} robustifies the estimator of \citet{ZMA05} to the presence of endogenous noise, and \citet{Jacod} propose a generalized pre-averaging estimator of the integrated variance accounting for various noise structures. The kernel estimator of \citet{BR09} also shows robustness properties that justify its adoption in a setting with correlation between microstructure noise and latent returns. 

\subsection{The proposed estimator}
The framework of conditionally Gaussian sequences, with the sampling algorithm introduced above, can be used to propose a new estimator of the integrated variance that is robust to the presence of correlation between microstructure noise and latent returns. Consider the bivariate system
\ba
\xi_{(t+1)/T} &=& \theta_{(t+1)/T} +\tilde B_1(t)\epsilon_1(t+1)+\tilde B_2(t)\epsilon_2(t+1) \label{system3aa}\\
\theta_{(t+1)/T} &=& \theta_{t/T} + b_1(t)\epsilon_1(t+1), \label{system3bb}
\ea
\noindent in which $a_0(t)=\tilde A_0(t)=b_2(t)=0$ and $a_1(t)=\tilde A_1(t)=1$ for all $t$. Model \eqref{system3aa}-\eqref{system3bb} is completed by characterizing the prior distributions: $\tilde B_1(t) \sim \phi(\mu_{B,t},\sigma^2_{B,t})$, $b_1(t) \sim \phi(\mu_{b,t},\sigma^2_{b,t})$ and finally $\tilde B_2(t) \sim IG(\alpha_{B,t},\beta_{B,t})$. The correlation between microstructure noise and true returns is introduced through the random variable $\epsilon_1$, appearing in both the equations. Note that \citet{HL06} found microstructure noise and latent returns negatively correlated: with a Gaussian prior on $B_1(t)$ it is possible to center, a priori, this correlation on a negative value. Furthermore, \citet{DS13} point out that a negative correlation appears more realistic, and that markets with no evidence of significant negative correlation are likely subject to an
extraordinary microstructure effect such as high risk aversion.

The full conditional distribution of $\theta^T$ is sampled with the G-FFBS. The forward step of the G-FFBS algorithm is performed through the following filtering iterations:
\ba
m(t+1) &=& m(t)+\frac{b_1(t)B_1(t)+\gamma(t)}{B_1^2(t)+B_2^2(t)+\gamma(t)} (\xi_{(t+1)/T}-m(t)) \nn\\
&=& m(t)+\frac{b_1(t)(b_1(t)+\tilde B_1(t)) +\gamma(t)}{(b_1(t)+\tilde B_1(t))^2 +\tilde B_2^2(t)+\gamma(t)} (\xi_{(t+1)/T}-m(t)) \label{filtering2a}\\
\gamma(t+1) &=& (\gamma(t)+b_1^2(t)) - \frac{(b_1(t)B_1(t)+\gamma(t))^2}{B_1^2(t)+B_2^2(t)+\gamma(t)} \nn\\
&=& (\gamma(t)+b_1^2(t)) - \frac{(b_1(t)(b_1(t)+\tilde B_1(t))+\gamma(t))^2}{(b_1(t)+\tilde B_1(t))^2+\tilde B_2^2(t)+\gamma(t)}. \label{filtering2b}
\ea
Note that if $\tilde B_1(t)=0\ \forall t$, the filtering iterations \eqref{filtering2a} and \eqref{filtering2b} simplify to the Kalman filter iterations (\citealt{K60}): 
\ba
m(t+1) &=& m(t)+\frac{b_1^2(t) +\gamma(t)}{b_1^2(t) +\tilde B_2^2(t)+\gamma(t)} (\xi_{(t+1)/T}-m(t)) \nn\\
\gamma(t+1) &=& (\gamma(t)+b_1^2(t)) - \frac{(b_1^2(t)+\gamma(t))^2}{b_1^2(t)+\tilde B_2^2(t)+\gamma(t)}.\nn \label{filtering2bb}
\ea

For the backward sampling step, $\theta^T|\xi^T$ are sampled from \eqref{factorization}, where 
\ba
p(\theta_{t/T}|\theta_{(t+1)/T},\dots,\theta_{T},\xi^T) &\propto& p(\xi_{(t+1)/T}|\theta_{(t+1)/T},\theta_{t/T}) p(\theta_{(t+1)/T}|\theta_{t/T}) p(\theta_{t/T}|\xi^t)\nn\\
&=& \phi\left(\xi_{(t+1)/T};\theta_{t/T}+\frac{B_1(t)}{b_1(t)}(\theta_{(t+1)/T}-\theta_{t/T}), B_2^2(t))\right) \nn\\
&& \phi\left(\theta_{(t+1)/T};\theta_{t/T},b_1^2(t)\right) \phi\left(\theta_{t/T};m(t),\gamma(t)\right) \nn\\
&\propto& \phi(V_tW_t,V_t), \label{eqApp}
\ea
\noindent with $V_t$ and $W_t$ defined in Appendix B.

Note that $B_1(t)=\tilde B_1(t)+b_1(t)$ and $B_2(t)=\tilde B_2(t)$. The correlation between transition and measurement error can be removed by fixing $\tilde B(t)=0$. In this case, $B_1(t)=b_1(t)$ and, as expected, $V_t=\left(1-\frac{\gamma(t)}{b_1^2(t)+\gamma(t)}\right)\gamma(t)$ and $W_tV_t=\left(1-\frac{\gamma(t)}{b_1^2(t)+\gamma(t)}\right)m(t)+\frac{\gamma(t)}{b_1^2(t)+\gamma(t)}\theta_{(t+1)/T}$, as in the usual FFBS.

\subsection{Some properties of the estimator}
The difference between FFBS and G-FFBS can be crucial for the estimation of the latent stochastic process. We highlight that the result in \eqref{eqApp} serves the purpose of sampling the latent stochastic process, and therefore the implied realized variance, from its correct posterior distribution under the general setting of conditionally Gaussian random sequences. Therefore, under our modeling assumptions, the consistency to the correct values is guaranteed by the MCMC properties. Unbiasedness in finite sample is not assured, unless one implements appropriately built unbiased MCMC schemes \citep{jacob2017unbiased}, which is beyond the scope of our paper. In finite samples we can say that the estimate of the integrated variance is optimal in the mean square sense, that is no other estimator can have a lower mean square error under our modeling assumptions, since the posterior mean is also the solution to the smoothing problem of conditionally Gaussian random sequences, solution known to be optimal in the mean square sense \citep{LS01b}. 

To study the asymptotic FFBS bias in a simplified setting, in this section we assume that in the model for observations and latent process expressed in Equations \eqref{system3aa} and \eqref{system3bb} the parameter values are constant in $t$ or they eventually stabilize to some steady state, starting from some value of $t$. Then for all $t=1,\dots,T$, $\tilde{B}_1(t)=\tilde{B}_1$, $\tilde{B}_2(t)=\tilde{B}_2$ and $b_1(t)=b_1$, with $\gamma$ converging to
\ba
\gamma^* := \frac12 \left( \sqrt{(2\tilde{B}_1+b_1)^2+4\tilde{B}_2^2} - (2\tilde{B}_1+b_1) \right), 
\ea 
which reduces to $\gamma_0^*:=\frac12 \left( \sqrt{b_1^2+4\tilde{B}_2^2} - b_1 \right)$ when correlation is neglected. We can assume the existence and uniqueness of such a limit since the conditions for asymptotic properties of the optimal linear filtering are satisfied (Theorem 14.3 of \citealt{LS01b}). Ignoring correlation results in a negative asymptotic bias if $V_t$, computed for the model with no correlation, is lower than the corresponding quantity in the full model. Equivalently, looking at the functional form of $V_t$ in Appendix B, the asymptotic negative bias resulting from neglecting the correlation occurs when 
$$
\frac{\left(1-\frac{\tilde B_1+b_1}{b_1}\right)^2}{\tilde B_2^2} + \gamma^{*-1} < \gamma_0^{*-1},
$$
which, after some algebra, can be written as
\ba
\frac1{b_1^2}\tilde{B}_1^4 + \frac2{b_1}\tilde{B}_1^3 > \frac{\sqrt{b_1^2+4\tilde{B}_2^2}}{b_1}\tilde{B}_1^2 + \left(b_1+\sqrt{b_1^2+4\tilde{B}_2^2}\right)\tilde{B}_1. \label{eq:ineq}
\ea

For specific annualized values of $b_1$ and $\tilde{B}_2$, the difference between $V_t$ computed with and without correlation is shown in Figure \ref{Bias}. Omitting the correlation implies a negative bias in correspondence of $\tilde B_1$ values at which the black solid line $\tilde{B}_1^4/b_1^2 + 2\tilde{B}_1^3/b_1$ is above the red dashed line $(\sqrt{b_1^2+4\tilde{B}_2^2}/b_1)\tilde{B}_1^2 + (b_1+\sqrt{b_1^2+4\tilde{B}_2^2})\tilde{B}_1$ , and a positive bias vice-versa. Therefore the direction of the asymptotic bias tends to follow the sign of $\tilde{B}_1$, with the exception of more extreme negative or positive $\tilde{B}_1$, for which the bias direction is reversed. Also note that the distortion is not symmetric for negative and positive $\tilde{B}_1$.
\begin{figure}[ht]
	\begin{center}
		\includegraphics[width=0.5\columnwidth]{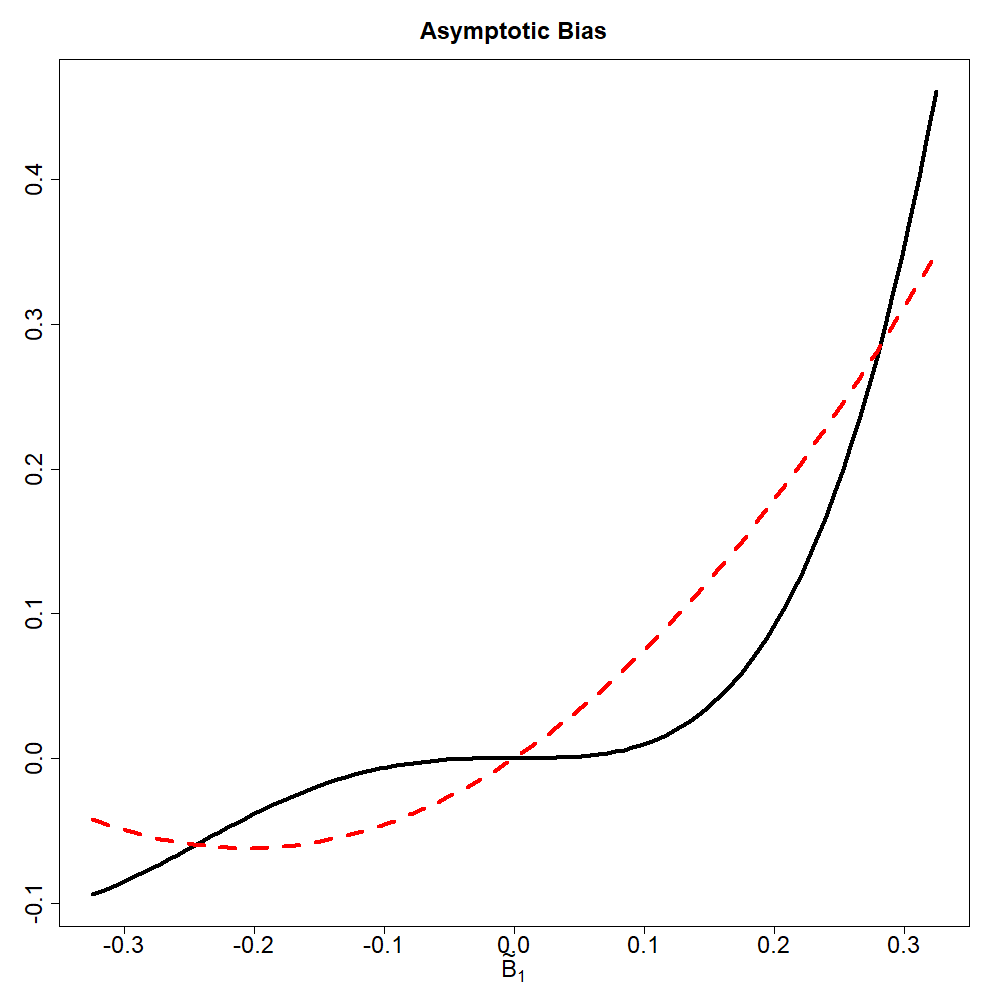}
	\end{center}
	\caption{Graphical representation of the asymptotic bias resulting when the model ignores the correlation between microstructure noise and latent returns. The black solid line is $\tilde{B}_1^4/b_1^2 + 2\tilde{B}_1^3/b_1$, whilst the red dashed line is $(\sqrt{b_1^2+4\tilde{B}_2^2}/b_1)\tilde{B}_1^2 + (b_1+\sqrt{b_1^2+4\tilde{B}_2^2})\tilde{B}_1$. Regions of black solid line above the red dashed line indicate negative bias, otherwise positive bias.}
	\label{Bias}
\end{figure}

For instance, for a correlation $\rho$ between microstructure noise and financial latent return taking values in the set $\pm\{0.15,0.30,0.75,0.90\}$, a noise to signal ratio ($NTS$) of 1.5 and an annualized transition error variance of 0.06, we simulate, for each value of $\rho$, 500 trading days, with $T=23400$ seconds per business day. To fix the correlation to the desired level, we generate the data imposing $\tilde B_2=\sqrt{(1-\rho^2)b_1^2\cdot NTS}$ and $\tilde B_1=\sqrt{\rho^2 b_1^2\cdot NTS}$ (scenarios with positive correlation) or $\tilde B_1=-\sqrt{\rho^2 b_1^2\cdot NTS}$ (scenarios with negative correlation). In this way, $\rho = sgn(b_1)\tilde{B}_1/(\sqrt{\tilde{B}_1^2+\tilde{B}_2^2})$.
For each day we compute the estimated quadratic variation for FFBS and G-FFBS, that is the sum of the squared first differences in $\theta_{1/T},\theta_{2/T},\dots,\theta_1$ sampled from distribution in \eqref{eqApp} (G-FFBS) and from \eqref{eqApp} with $\tilde{B}_1=0$ (FFBS), and we compare them in Figure \ref{QV}. It is clear that neglecting $\rho$ has an impact on the inference of the latent process. As expected, the distance between the two methodologies widens in the magnitude of the correlation: see in the left figure how FFBS worsens with higher and higher negative correlations introduced in the system, against a G-FFBS algorithm that remains unbiased. But, as expected from \eqref{eq:ineq} and its graphical representation in Figure \ref{Bias}, the FFBS bias direction does not necessarily follow the sign of the correlation: negative correlation is imposed through a negative $\tilde{B}_1$, but in the case of $\rho=-0.90$, the annualized $\tilde B_1=-0.27$ is outside the region $(-0.245,0)\cup(0.281,\infty)$ for which the bias would be negative. The results are similar in the right panel, when positive correlations of 0.15, 0.75 and 0.9 are hypothesized: more and more correlation worsens the quadratic variation estimated by FFBS, but, as expected, asymmetrically relative to the scenarios with negative correlation: the impact of a higher correlation seems worse, and in the most extreme scenario with $\rho=0.9$, the bias does not become negative since $\tilde{B}_1=0.27$, inside the region $(-\infty,-0.245)\cup(0,0.281)$ of positive FFBS bias.
\begin{figure}[ht]
\begin{center}
\includegraphics[width=\columnwidth]{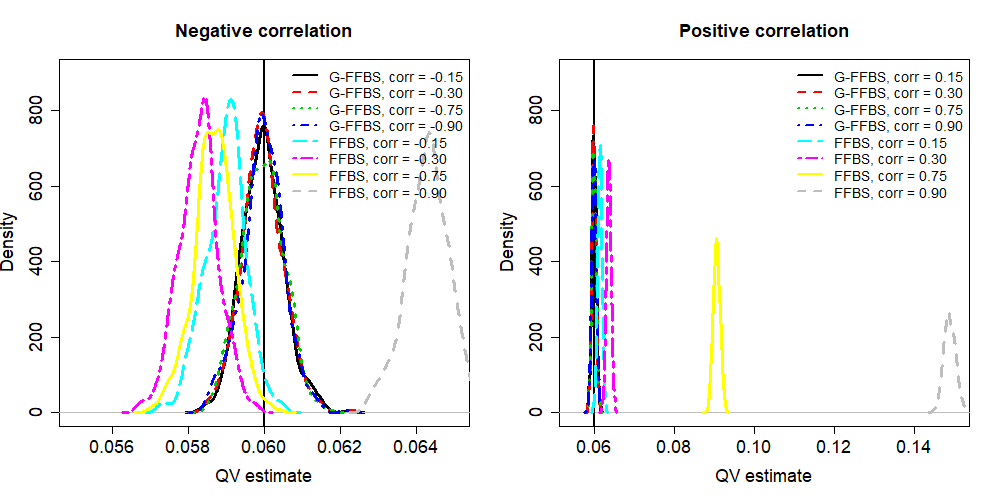}
\end{center}
\caption{Estimated Quadratic Variation simulated with FFBS and G-FFBS and the true latent value (vertical bar), over 500 trading days, when the correlation between microstructure noise and financial latent return is -0.15, -0.30, -0.75 and -0.90 (left plot) or 0.15, 0.30, 0.75 and 0.90 (right plot).}
\label{QV}
\end{figure}

\subsection{Other MCMC steps}
To sample from the remaining full conditional distributions, note that 
\ba
\begin{pmatrix}\xi_{(t+1)/T} \\ \theta_{(t+1)/T}\end{pmatrix} | \theta_{t},b_1(t),B_1(t),B_2(t)\sim \Phi\left\{\begin{pmatrix}\theta_{t} \\ \theta_{t}\end{pmatrix},\begin{pmatrix}B_1^2(t)+B_2^2(t) && b_1(t)B_1(t) \\ b_1(t)B_1(t) && b_1^2(t)\end{pmatrix}\right\}, \nn 
\ea
\noindent and 
\ba
\xi_{(t+1)/T}| \theta_{(t+1)/T},\theta_{t},b_1(t),B_1(t),B_2(t) &\sim& \phi\left(\theta_{t/T}+\frac{B_1(t)}{b_1(t)}(\theta_{(t+1)/T}-\theta_{t/T}),B_2^2(t)\right). \nn
\ea
The full conditionals of $\tilde B_1(t)$ and $\tilde B_2^2(t)$ are in standard form and provided in Appendix B. On the other hand, we sample $b_1(t)$ with a \textit{Hamiltonian} step (see Chapter 5 in \citealt{BGJM11} 
for an introduction to the algorithm). The motivation for using this step is its ability to exploit the information in the full conditional gradient of $b_1(t)$, for a faster exploration of the parameter space, thus overcoming the random walk behavior of the Metropolis-Hastings step in a highly dimensional space. We refer the Reader to Appendix C for the details on the Hamiltonian step.
Note that, when there is no correlation (that is when $\tilde B_1(t)=0$), the sampler can be reduced to the Gibbs algorithm in \citet{PCM12}.

The output of the whole algorithm is a collection of samples $$\{\theta^T_{(i)},\tilde B^T_{1(i)},\tilde B^T_{2(i)},b^T_{1(i)}\}_{i=1}^M,$$ \noindent where $M$ is the number of iterations of the MCMC scheme. Then the proposed estimator of the integrated variance is 
\ba
\frac{1}{M-M_0} \sum_{i=M_0+1}^M \sum_{t=1}^T (\theta_{(t+1)/T,(i)}-\theta_{t,(i)})^2\label{IVestimator}
\ea 
\noindent where $M_0<M$ is the burn-in, that is the number of samples discarded at the beginning of the MCMC chain to allow the simulation process to reach its stationary regime. To summarize, the procedure for obtaining the IV estimator is the following:
\textit{
\begin{enumerate}
	\item For iterations $i=1\dots,M$
	\begin{enumerate}
	\item Sample $\theta^T_{(i)}$ from the G-FFBS algorithm in Proposition \ref{Prop1}, assuming  $a_0(t)=\tilde A_0(t)=b_2(t)=0$ and $a_1(t)=\tilde A_1(t)=1$ for all $t$
	\item Sample  $\tilde B^T_{1(i)}$ from the full conditional \eqref{fullB1} in Appendix B
	\item Sample $\tilde B^T_{2(i)}$ from the full conditional \eqref{fullB2} in Appendix B
	\item Sample $b^T_{1(i)}$ from the Hamiltonian step highlighted in Appendix C
	\end{enumerate}	
\item Compute the estimator given in Equation \eqref{IVestimator}.
\end{enumerate}
}

We simulate 500 trading days, for $M=1000$, $M_0=500$ and correlations $\pm0.10$, starting all the chains from values significantly different from the true ones. The hyper-parameters are $\mu_{B,t}=-1.48\cdot 10^{-5}$, $\sigma^2_{B,t}=1.53\cdot 10^{-10}$, $\mu_{b,t}=1.21\cdot 10^{-4}$, $\sigma^2_{b,t}=1.02\cdot 10^{-8}$, $\alpha_{B,t}=2.1$ and $\beta_{B,t}=1.99\cdot 10^{-8}$ for all $t$, fixed so that they are at least 20\% higher or lower than the true values used to generate the datasets. Our methodology is compared with the estimators of \citet{KL10}, \citet{BR09} and \citet{Jacod} (for \citealt{Jacod}, both the adjusted and unadjusted estimators for small sample sizes are implemented). 
For completeness, we add to the comparison other popular estimators, as the quasi-maximum likelihood estimator of \citet{xiu2010quasi}, the realized kernel of \citet{BNHLS11}, and the two-scale estimator proposed by \cite{ZMA05}. 
The method of \citet{KL10} requires the choice of the tuning parameter $K$: we use $K=T^{2/3}$, since it performs well in the simulations in \citet{KL10} and it is what the authors suggest in their empirical study. Alternative values of $K$ are shown in \citet{KL10} to perform worse and depend on unobservable quantities estimated with a slow-decaying bias. For the estimator proposed in \citet{BR09}, the tuning parameters are chosen according to the rule of thumb proposed in Equation (26) of \citet{BR09}, in simulation computed using the true values and in the application below to Microsoft Corporation, using the corresponding values in Table 1 of \citet{BR06}. Finally, the tuning parameters of \citet{Jacod} are fixed, using their notation, to $k_n=51$, $\theta=k_n/\sqrt{T}$ and $g(x)=x \wedge (1-x)$, as in their simulation studies. The results are reported in Figure \ref{HMC} and in Table \ref{Table1}: there is a clear advantage for our methodology in terms of dispersion and root mean square error (RMSE). The quasi-maximum likelihood estimator performs particularly well in terms of bias, even if it shows some relevant positive dispersion that contributes to increase the RMSE to a level higher than that of the method we propose.
\begin{figure}[ht]
\begin{center}
\includegraphics[scale=0.5]{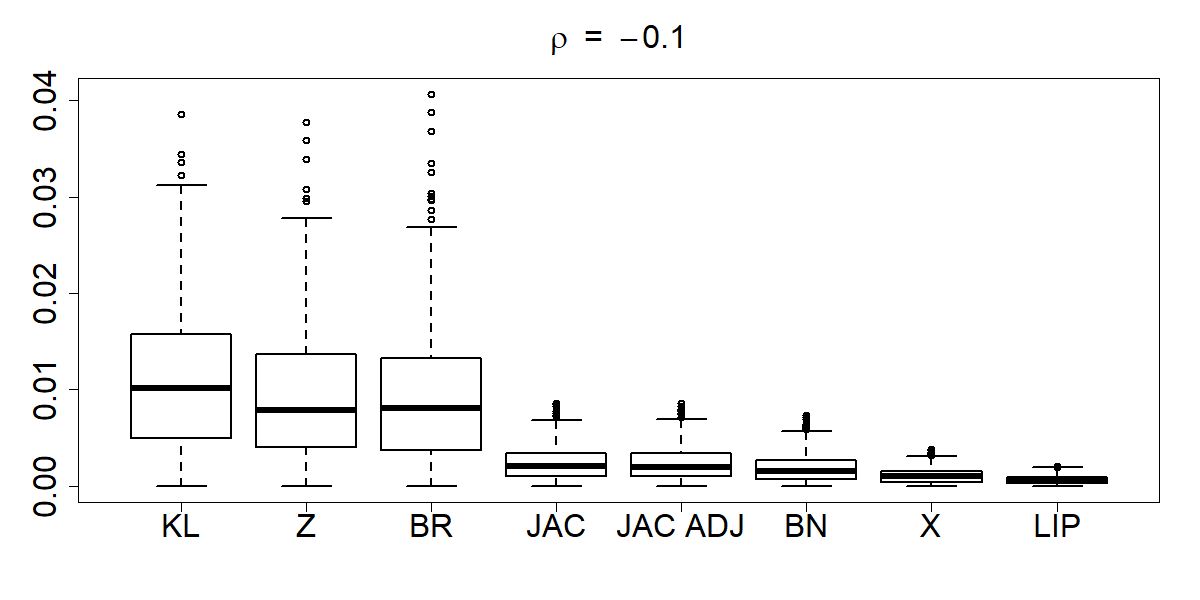}\\
\includegraphics[scale=0.5]{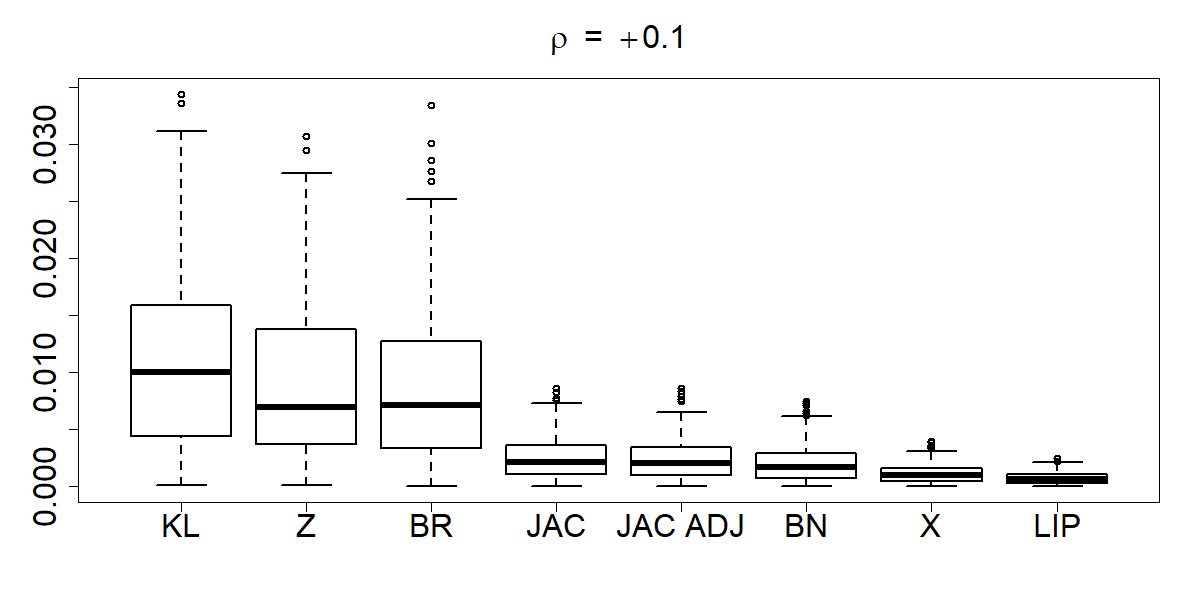}
\end{center}
\caption{Absolute differences between estimated quadratic variations and the true latent value, for the methods in \citet{KL10} (KL), \citet{ZMA05} (Z), \citet{BR09} (BR), \citet{Jacod} (JAC), the small sample adjusted estimator of \citet{Jacod} (JAC ADJ), \citet{BNHLS11} (BN), \citet{xiu2010quasi} (X) and for our methodology (LIP). The correlation between microstructure noise and financial latent return is fixed to $\pm0.10$.}
\label{HMC}
\end{figure}

\begin{table}[ht]
	\centering
	\small
	\begin{tabular}{|l|ccc|ccc|}
		\hline
				& \multicolumn{3}{|c|}{$\rho = -0.10$} & \multicolumn{3}{|c|}{$\rho = +0.10$}\\		
				Method & Bias $\times$ 1000 & Std $\times$ 1000 & RMSE $\times$ 1000 & Bias $\times$ 1000 & Std $\times$ 1000 & RMSE $\times$ 1000\\ 
		\hline
		KL & -4.53 & 12.62 & 13.41 & -5.42 & 11.93 & 13.11 \\ 
		Z & -2.74 & 11.37 & 11.70& -3.24 & 10.72 & 11.19 \\ 
		BR & -1.02 & 11.72 & 11.76& -1.53 & 11.04 & 11.14 \\ 
		JAC & -0.39 & 2.97 & 3.00& -0.26 & 3.01 & 3.03 \\ 
		JAC ADJ & -0.09 & 2.99 & 2.99& 0.04 & 3.03 & 3.03 \\ 
		BN & 0.89 & 2.35 & 2.52& 1.19 & 2.41 & 2.69 \\ 
		X & \textbf{0.03} & 1.35 & 1.35& \textbf{0.10} & 1.40 & 1.40 \\ 
		LIP & 0.56 & \textbf{0.56} & \textbf{0.80}& \textbf{-0.10} & \textbf{0.90} & \textbf{0.90} \\ 
		\hline
	\end{tabular}
\caption{Bias, standard deviation and RMSE for the methods in \citet{KL10} (KL), \citet{ZMA05} (Z), \citet{BR09} (BR), \citet{Jacod} (JAC), the small sample adjusted estimator of \citet{Jacod} (JAC ADJ), \citet{BNHLS11} (BN), \citet{xiu2010quasi} (X) and for our methodology (LIP), over 500 trading days, in the simulation setting with correlation between microstructure noise and financial latent return fixed to $\pm0.10$.}
\label{Table1}
\end{table}

We also run the algorithm on 1-second frequency logarithmic prices of Microsoft Corporation, for the period April 1, 2007 - June 30, 2008, and the estimated annualized quadratic variations are reported in Figure \ref{IV_real}. A practical implication of the differences in the estimation of Microsoft integrated variances is a Gaussian Value At Risk that deviates, on average over the period studied, from 2\% to 6\% of a hypothetical initial investment.

\begin{figure}[ht]
\begin{center}
\includegraphics[scale=0.5,angle=-90]{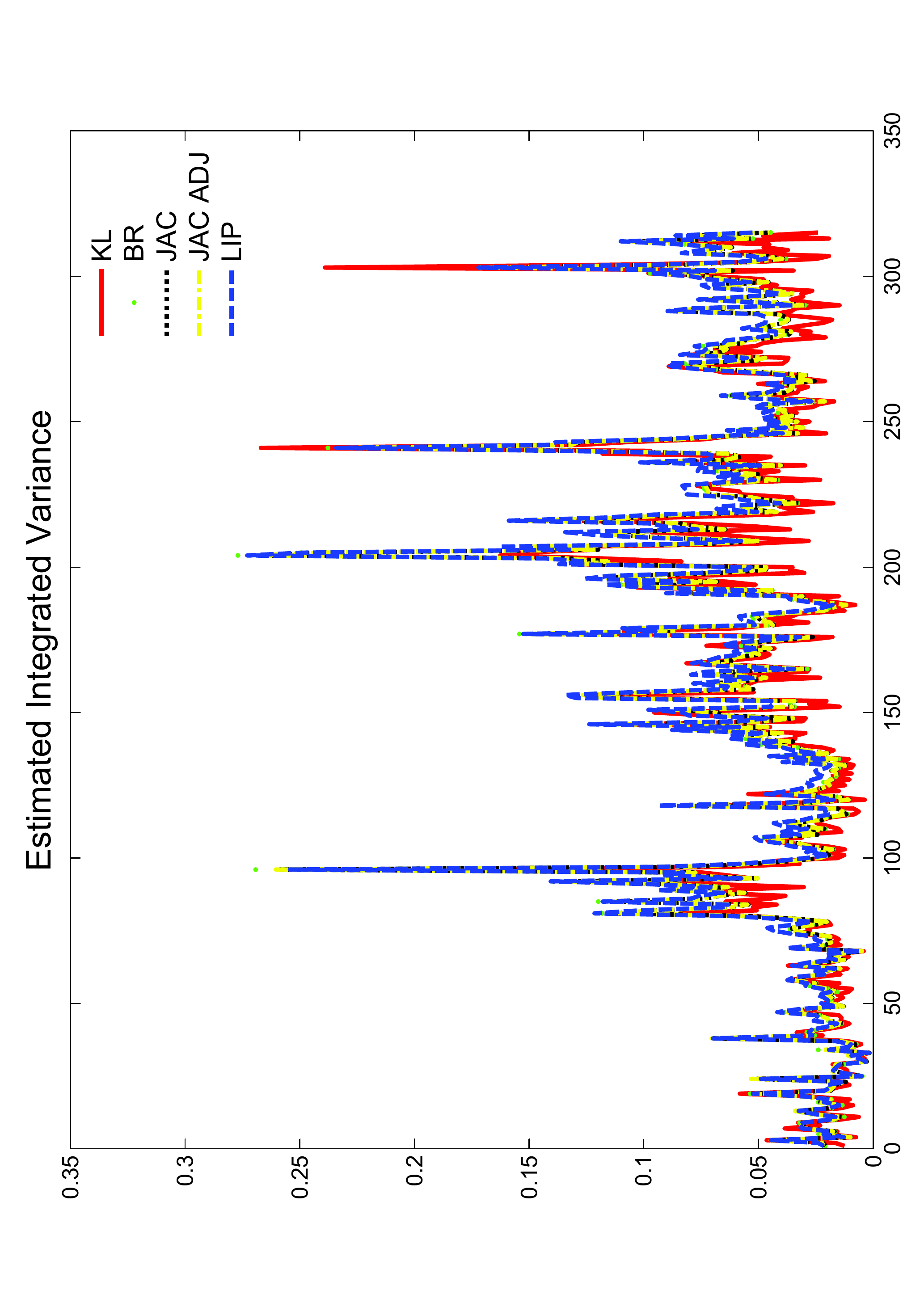}
\end{center}
\caption{Integrated variance for Microsoft Corporation, estimated with the methods in \citet{KL10} (KL), \citet{BR09} (BR), unadjusted (JAC) and adjusted (JAC ADJ) \citet{Jacod}, and with our methodology (LIP), in the period April 1, 2007 - June 30, 2008.}
\label{IV_real}
\end{figure}
\section{Conclusions}\label{Conclusions}
Overwhelming evidence contrasts the independent microstructure noise assumption, in favour of market noise correlated with increments in the efficient price, with important implications for volatility estimation based on high-frequency data (\citealt{HL06}). Furthermore, such a dependence naturally arises in common microstructure models, as discussed in depth in \citet{DS13}. On the other hand, with the notable exceptions of \citet{BNHLS08}, \citet{KL10} and \citet{ZMA05}, several results in the literature analyze high-frequency volatility estimation assuming that the noise process is independent of the efficient price. In the present paper we use the theoretical framework of the conditionally Gaussian random sequences of \citet{LS72,LS01a,LS01b}, 
to propose a new integrated variance estimator that is robust to correlation between microstructure noise and latent returns. To this aim, we adopt a Bayesian perspective and sample a posteriori the latent price process through a generalization of the Forward Filtering Backward Sampling algorithm of \citet{FS94} and \citet{CK94}. An application to Microsoft 1-second logarithmic prices is provided, and a simulation study shows an improved performance of our estimator in terms of RMSE and dispersion, relative to the alternatives in the literature. Our methodology can be implemented in other financial problems, for instance to generalize the framework of \citet{B97} to normal inverse Gaussian financial logarithmic returns with measurement error, or, following the approaches of \citet{HRS92} and \citet{HRS94}, in ARCH and Stochastic Volatility models.

\section*{Acknowledgements}
Stefano Peluso acknowledges support from the Swiss National Science Foundation (SNF), Grant No. P1TIP1\_155031. Antonietta Mira also gratefully acknowledges the financial support by SNF.

\section*{Appendix A: Proof of Proposition \ref{Prop1}}
	Using the notation $x^t = \{x_1,\dots,x_t\}$ and suppressing the dependence on $\tilde D(1),\dots,\tilde D(T)$, G-FFBS exploits the factorization 
	\ba
	p(\theta^T|\xi^T) = \prod_{t=1}^T p(\theta_t|\theta_{t+1},\dots,\theta_T,\xi^T). \label{factorization}
	\ea
	Noting that
	\ba
	&& p(\theta^T,\xi^T) = \prod_{t=1}^T p(\theta_t,\xi_t|\theta_{t-1}) \nn\\
	&& \ \ = \prod_{t=0}^{T-1} \phi\left\{\begin{pmatrix}\xi_{t+1} \\ \theta_{t+1}\end{pmatrix}; \begin{pmatrix} A_0(t) + A_1(t)\theta_{t} \\ a_0(t)+a_1(t)\theta_{t}\end{pmatrix},\begin{pmatrix} B \circ B(t) && b \circ B(t) \\ (b \circ B(t))^* && b \circ b(t) \end{pmatrix}\right\}, \nn
	\ea
	\noindent $\xi_{t+1}|\theta_{t+1},\theta_{t} \sim \phi(\mu_t,\Sigma_t)$, where 
	\ba
	\mu_t &=& A_0(t)+A_1(t)\theta_t + B \circ b(t) (b \circ b)^+(t) (\theta_{t+1}-a_0(t)-a_1(t)\theta_t). \nn
	\ea
	\noindent Thus the factor $p(\theta_t|\theta_{t+1},\dots,\theta_T,\xi^T)$ in \eqref{factorization} can be expressed as
	\ba
	&& p(\theta_t|\theta_{t+1},\dots,\theta_T,\xi^T) \propto p(\theta_t,\dots,\theta_T,\xi_{t+1},\dots,\xi_{T}|\xi^t) \nn\\
	&&\ \ = \prod_{i=t+1}^T p(\xi_i,\theta_i|\theta_{i-1}) \cdot p(\theta_t|\xi^t)\nn\\
	&&\ \ \propto p(\xi_{t+1}|\theta_{t+1},\theta_t) p(\theta_{t+1}|\theta_t) p(\theta_t|\xi^t)\nn\\
	&&\ \ \propto \exp\left\{-\frac12\left((\xi_{t+1}-\mu_t)^*\Sigma^+_t(\xi_{t+1}-\mu_t)\right)\right\}\nn\\
	&&\ \ \ \ \ \cdot \exp\left\{-\frac12\left((\theta_{t+1}-a_0(t)-a_1(t)\theta_t)^*(b\circ b)^+(t)(\theta_{t+1}-a_0(t)-a_1(t)\theta_t) \right)\right\}\nn\\
	&&\ \ \ \ \ \cdot \exp\left\{-\frac12\left((\theta_{t}-m(t))^*\gamma^+(t)(\theta_{t}-m(t)) \right)\right\}\nn\\
	&&\ \ \propto \exp\left\{-\frac12\left[\mu_t^*\Sigma^+_t\mu_t-2\mu_t^*\Sigma^+_t\xi_{t+1} + \theta_{t}^*a_1(t)^*\left(b\circ b\right)^+a_1(t)\theta_{t} +\right.\right. \nn\\
	&&\ \ \ \ \ \left.\left. -2\theta_t^*a_1(t)^*\left(b\circ b\right)^+\left(\theta_{t+1}-a_0(t) + \theta_t^*\gamma^+(t)\theta_t - 2\theta_t^*\gamma^+(t)m(t)\right) \right]\right\}\nn\\
	&&\ \ \propto \exp\left\{-\frac12\left[\theta_t^*\left(A_1(t)-B\circ b(b\circ b)^+a_1(t)\right)^*\Sigma_t^+\left(A_1(t)-B\circ b(b\circ b)^+a_1(t)\right)\theta_t + \right.\right.\nn\\
	&&\ \ \ \ \ \left.\left. +2\theta_t^*\left(A_1(t)-B\circ b(b\circ b)^+a_1(t)\right)^*\Sigma_t^+\left(A_0(t)+B\circ b(b\circ b)^+(\theta_{t+1}-a_0(t))\right)  + \right.\right.\nn\\
	&&\ \ \ \ \ \left.\left. -2\theta_t^*\left(A_1(t)-B\circ b(b\circ b)^+a_1(t)\right)^*\Sigma_t^+\xi_{t+1}  +\theta_{t}^*a_1(t)^*\left(b\circ b\right)^+a_1(t)\theta_{t} +\right.\right. \nn\\
	&&\ \ \ \ \ \left.\left. -2\theta_t^*a_1(t)^*\left(b\circ b\right)^+\left(\theta_{t+1}-a_0(t) + \theta_t^*\gamma^+(t)\theta_t - 2\theta_t^*\gamma^+(t)m(t)\right) \right]\right\}\nn\\
	&&\ \ = \exp\left\{-\frac12\left( \theta_t^*V_t^{-1}\theta_t - 2\theta_t^*W_t \right)\right\}\nn\\ 
	&&\ \ \propto \phi\Big(V_t W_t,V_t\Big).\nn
	\ea

\section*{Appendix B: Auxiliary quantities and full conditionals not mentioned in the main text}
Quantities $V_t$ and $V_tW_t$ for Equation \eqref{eqApp}:
\ba
V_t &=& \left(\frac{\left(1-\frac{B_1(t)}{b_1(t)}\right)^2}{B_2^2(t)}+\frac1{b_1^2(t)}+\frac1{\gamma(t)}\right)^{-1} \nn\\
&=& \left( 1-\frac{\left(1-\frac{B_1(t)}{b_1(t)}\right)^2b_1^2(t)\gamma(t) + B_2^2(t)\gamma(t)}{\left(1-\frac{B_1(t)}{b_1(t)}\right)^2b_1^2(t)\gamma(t) + B_2^2(t)\gamma(t)+B_2^2(t)b_1^2(t)}\right) \gamma(t) \nn
\ea
\ba
V_tW_t &=& V_t\left[\frac{1-\frac{B_1(t)}{b_1(t)}}{B_2^2(t)}\left(\xi_{(t+1)/T}-\frac{B_1(t)}{b_1(t)}\theta_{(t+1)/T}\right) + \frac{\theta_{(t+1)/T}}{b_1^2(t)} + \frac{m(t)}{\gamma(t)}\right] \nn\\
&=& \left( 1-\frac{\left(1-\frac{B_1(t)}{b_1(t)}\right)^2b_1^2(t)\gamma(t) + B_2^2(t)\gamma(t)}{\left(1-\frac{B_1(t)}{b_1(t)}\right)^2b_1^2(t)\gamma(t) + B_2^2(t)\gamma(t)+B_2^2(t)b_1^2(t)}\right)m(t) + \nn\\ 
&& \frac{\left(1-\frac{B_1(t)}{b_1(t)}\right)\left(\frac{\xi_{(t+1)/T}}{\theta_{(t+1)/T}}-\frac{B_1(t)}{b_1(t)}\right)b_1^2(t)\gamma(t) + B_2^2(t)\gamma(t)}{\left(1-\frac{B_1(t)}{b_1(t)}\right)^2b_1^2(t)\gamma(t) +B_2^2(t)\gamma(t)+B_2^2(t)b_1^2(t)}\theta_{(t+1)/T}.\nn
\ea

Full conditionals of $\tilde B_1(t)$ and $\tilde B_2^2(t)$ in Section \ref{SectionVariance}:
\ba
&& p\left(\tilde B_1(t)|\theta^T, \xi^T,\tilde B_2^T,b_1^T,\{\tilde B_1(s),\ s \neq t\}\right) \nn\\
&&\ \ \ \propto p(\xi_{(t+1)/T}|\theta_{(t+1)/T},\theta_{t/T},\tilde B_1(t),\tilde B_2(t),b_1(t)) p(\tilde B_1(t)) \nn\\
&&\ \ \ \propto \frac{\phi\left\{\frac{\mu_{B,t}+\frac{\sigma^2_{B,t}}{b_1(t)\tilde B_2^2(t)}(\theta_{(t+1)/T}-\theta_{t/T})(\xi_{(t+1)/T}-\theta_{(t+1)/T})}{\sqrt{1+\frac{\sigma^2_{B,t}}{b_1^2(t)\tilde B_2^2(t)}(\theta_{(t+1)/T}-\theta_{t/T})^2}},\sigma^2_{B,t}\right\}}{\sqrt{1+\frac{\sigma^2_{B,t}}{b_1^2(t)\tilde B_2^2(t)}(\theta_{(t+1)/T}-\theta_{t/T})^2}} \label{fullB1}\\
&& \nn\\
&& p\left(\tilde B_2^2(t)|\theta^T, \xi^T,\tilde B_1^T,b_1^T,\{\tilde B_2(s),\ s \neq t\}\right) \nn\\
&&\ \ \ \ \propto p(\xi_{(t+1)/T}|\theta_{(t+1)/T},\theta_{t/T},\tilde B_1(t),\tilde B_2^2(t),b_1(t)) p(\tilde B_2^2(t)) \nn\\
&&\ \ \ \ \propto IG\left\{\alpha_{B,t}+\frac12,\beta_{B,t}+\frac12\left(\xi_{(t+1)/T}-\theta_{(t+1)/T}-\frac{\tilde B_1(t)}{b_1(t)}(\theta_{(t+1)/T}-\theta_{t/T})\right)^2\right\}\label{fullB2}
\ea

\section*{Appendix C: Hamiltonian step of Section \ref{SectionVariance}}
The Hamiltonian step is performed through the following iterative procedure:
\begin{enumerate}
\item Sample the auxiliary momentum variable $p\{1\}$ from $\Phi(0,1)$,
\item Propose $b_1(t)^*$ from the Leapfrog algorithm. In details, fix $k\{1\}$ to the current value of $b_1(t)$. For step size $\epsilon$ and number of iterations $L$:
\small
\ba
&& p\{1+\epsilon/2\} = p\{1\} - \frac\epsilon2 \left[\frac{k\{1\}-\mu_{b,t}}{\sigma^2_{b,t}} \right.\nn\\
&&\ \ \ \ \ \ \left. + \frac1{\tilde B_2(t)^2} \left(\xi_{(t+1)/T}-\theta_{(t+1)/T} - \frac{\tilde B_1(t)}{k\{1\}} (\theta_{(t+1)/T}-\theta_{t/T}) \right) \frac{\tilde B_1(t)}{k\{1\}^2} (\theta_{(t+1)/T}-\theta_{t/T})\right] \nn
\ea   
\normalsize    
\noindent For $i=1,\dots,L-1$:
\small
\ba
k\{1+i\epsilon\} &=& k\{1+(i-1)\epsilon\} + \epsilon p\{1+(i-1/2)\epsilon\} \nn\\
p\{1+(i+1/2)\epsilon\} &=& p\{1+(i-1/2)\epsilon\} - \epsilon \left[\frac{k\{1+i\epsilon\}- \mu_{b,t}}{\sigma^2_{b,t}} + \right.\nn\\
&& \left.\frac1{\tilde B_2(t)^2} \left(\xi_{(t+1)/T}-\theta_{(t+1)/T} - \frac{\tilde B_1(t)}{k\{1+i\epsilon\}} (\theta_{(t+1)/T}-\theta_{t/T}) \right)\frac{\tilde B_1(t)}{k\{1+i\epsilon\}^2} (\theta_{(t+1)/T}-\theta_{t/T})\right] \nn
\ea
\normalsize
Finally,
\small
\ba
k\{1+L\epsilon\} &=& k\{1+(L-1)\epsilon\} + \epsilon p\{1+(L-1/2)\epsilon\} \nn\\
p\{1+L\epsilon\} &=& p\{1+(L-1/2)\epsilon\} - \frac{\epsilon}2 \left[\frac{k\{1+L\epsilon\}- \mu_{b,t}}{\sigma^2_{b,t}} + \right.\nn\\
&& \left.\frac1{\tilde B_2(t)^2} \left(\xi_{(t+1)/T}-\theta_{(t+1)/T} - \frac{\tilde B_1(t)}{k\{1+L\epsilon\}} (\theta_{(t+1)/T}-\theta_{t/T}) \right)\frac{\tilde B_1(t)}{k\{1+L\epsilon\}^2} (\theta_{(t+1)/T}-\theta_{t/T}) \right] \nn
\ea
\normalsize
\noindent and the proposed value is $b_1(t)^*=k\{1+L\epsilon\}$.
\item Evaluate potential and kinetic energies $U$ and $Z$ at proposed and current values:
\small
\ba
U(t) &\propto& \frac{(b_1(t)-\mu_{b,t})^2}{2\sigma^2_{b,t}} + \frac1{2\tilde B_2(t)^2} \left[\xi_{(t+1)/T}-\theta_{(t+1)/T} - \frac{\tilde B_1(t)}{b_1(t)} (\theta_{(t+1)/T}-\theta_{t/T})\right]^2\nn\\
Z(t) &=& \frac12 p\{1\}^2 \nn\\
U(t)^* &\propto& \frac{(b_1(t)^*-\mu_{b,t})^2}{2\sigma^2_{b,t}} + \frac1{2\tilde B_2(t)^2} \left[\xi_{(t+1)/T}-\theta_{(t+1)/T} - \frac{\tilde B_1(t)}{b_1(t)^*} (\theta_{(t+1)/T}-\theta_{t/T})\right]^2\nn\\
Z(t)^* &=& \frac12 p\{1+L\epsilon\}^2 \nn
\ea
\normalsize
\item Accept $b_1(t)^*$ with probability 
$$\min\left( 1, \exp\{U(t)-U(t)^*+Z(t)-Z(t)^*\} \right).$$
\end{enumerate}

\small
\singlespacing
\bibliographystyle{apalike}
\bibliography{refLiptser}
\normalsize

\end{document}